\title{Coding for Fast Content Download}
\author{
\authorblockN{Gauri Joshi}
\authorblockA{EECS Dept.,
MIT\\
Cambridge, MA 02139, USA \\
Email: gauri@mit.edu}
\and
\authorblockN{Yanpei Liu}
\authorblockA{ECE Dept.,
Univ.~Wisconsin Madison\\
Madison, WI, 53705, USA \\
Email: yliu73@wisc.edu}
\and
\authorblockN{Emina Soljanin}
\authorblockA{
Bell Labs, Alcatel-Lucent \\
Murray Hill NJ 07974, USA\\
Email: emina@bell-labs.com}
}
\newtheorem{defn}{Definition}
\newtheorem{rem}{Remark}
\newtheorem{lem}{Lemma}
\newtheorem{thm}{Theorem}
\newcommand{\expec}{\text{E}}
\newcommand{\var}{\text{V}}
\DeclareMathOperator*{\argmin}{arg\,min}
\begin{document}
\maketitle
\begin{abstract}
We study the fundamental trade-off between storage and content download time. We show that the download time can be significantly reduced by dividing the content into chunks, encoding it to add redundancy and then distributing it across multiple disks.
We determine the download time for two content access models -- the fountain and fork-join models that involve simultaneous content access, and individual access from enqueued user requests respectively. For the fountain model we explicitly characterize the download time, while in the fork-join model we derive the upper and lower bounds. Our results show that coding reduces download time, through the diversity of distributing the data across more disks, even for the total storage used. 
\end{abstract}

\begin{keywords}
Data storage, erasure correcting code, content download, fork-join queue.
\end{keywords}

\section{Introduction}
Consumers of cloud storage and content centric networking demand that their content be reliably stored and quickly accessible. Cloud storage providers today strive to meet both demands by simply replicating content throughout the storage network over multiple disks. A large body of recent literature proposes erasure coding as a more efficient way to provide reliability.

Research in coding for distributed storage was galvanized by the results reported in \cite{dimakis07}. Prior to that work, literature on distributed storage recognized that when compared with replication, coding can offer huge storage savings for the same reliability levels. But it was also argued that the benefits of coding are limited, and can easily be outweighed by certain disadvantages and extra complexity. Namely, to provide reliability in multi-disk storage systems, when some disks fail, it must be possible to restore either the exact lost data or an equivalent reliability with minimal download from the remaining storage. The cost of this repair regeneration was considered much higher in coded than in replication system \cite{liskov}, until \cite{dimakis07} established existence and advantages of new regenerating codes. This work was then quickly followed, and the area is very active today (see e.g., \cite{kumar, kumar2} and references therein). 


A related line of work is concerned with another potential weakness of coding in distributed storage. Namely, if any part of the data changes, the corresponding coded packets must be updated accordingly. To minimize the cost of such updates, the authors in \cite{ankitISIT11} propose a class of randomized codes which have update complexity 
scaling logarithmically with the size of data but can correct a linearly scaled number of disk failures. Furthermore, the existence of such update efficient codes that also minimize the repair bandwidth 
for exact data reconstruction was established in \cite{ankitISIT11}.

Content accessibility is another main property of interest. In current multi-disk, cloud storage systems (e.g., Amazon), content files stored on the same disks may be simultaneously requested by multiple users. The file accessibility, therefore, depends on the dynamics of requests, and is limited by the disks' I/O bandwidth. In practice, it is again commonly improved by replicating content on multiple disks, which in turn requires more energy. Only recently was it realized that erasure coding can guarantee the same level of content accessibility with lower storage than replication. \cite{berk_isit,ulric_muriel_emina}. In \cite{ulric_muriel_emina}, it is considered that when there are multiple access requests, all but one of them are blocked, and the accessbility is measured in terms of blocking probability. In \cite{berk_isit}, multiple requests are placed in a queue instead of blocking. The authors propose a scheduling scheme to map requests to servers (or disks) to minimize the waiting time.


In this paper, we assume that requests that cannot be served upon arrival are queued, but we measure the accessibility in terms of the download time which includes the waiting time in queues and the time taken to read the data from the disk, which could be random. When the content is available redundantly on multiple disks, it is sufficient to read it only from a subset of these disks in order to recover the data. The key contribution of our work is to analyze how waiting for only a subset of disks to be read, provides diversity in storage which helps achieve a significant reduction in the download time.


Using redundancy in coding for delay reduction has also been studied in packet transmission
\cite{kabatiansky_krouk_semenov, maxemchuk2, mia}, and in some other scenarios of content retrieval in \cite{Allerton10e}. Although they share some common spirit, they do not consider storage systems and the impact of redundancy coding in such scenarios.

This paper is organized as follows. In Section~\ref{sec:access_models} we introduce the two content access models investigated in this work. In Section~\ref{sec:main_idea} we present the central idea of how coding gives diversity and reduces download time.  Then we determine the download time for the two models in Section~\ref{sec:multiple_fountain} and Section~\ref{sec:fork_join} respectively. Finally we conclude in Section~\ref{sec.conclusion}.

\section{Two Content Access/Delivery Models}
\label{sec:access_models}
We consider two specific content access models in order to isolate and emphasize different sources of delay in content delivery, and show how they can be addressed through coding. In general, a content delivery model is some hybrid of the two models considered here.

\subsection{The Fountain Model}
\label{sec.fountain_model}
Our first model can describe, e.g., a content delivery network scenario, where content may not be available at the point of request, and there is a delay associated with waiting for the content to become available at the contacted server. Once in possession of the content, the server can deliver it to the users by, e.g., broadcast enabled by digital fountain codes \cite{raptor_codes}. Thus multiple users do not affect each other's delivery time. Another scenario that can be modeled in this way is when content is broadcast at some prescribed times, but the arrival of users is random. We will refer to this model as the {\it fountain} model (where fountains are turned on at random times).

When a content request arrives at the server,  the content has to be fetched from the distribution network and then transmitted to the user. The waiting time to obtain the content from the network is a non-negative random variable $W$. Once the content is obtained, the time to deliver it to the user is a positive random variable $D$, which is proportional to the size of content and erasure rate of the channel. 
Multiple users can access a server simultaneously and the content is broadcast to these users. Hence, the response time of the system (the download completion time) is $W+D$, and is independent of the number of requests being served simultaneously.

\subsection{The Queueing Model}
Our second model can describe, e.g., a storage area network (SAN), where content is stored on a disk, which can be accessed only by one user at a given time. The delay in this model is associated with the response time of the queueing systems. In this model, multiple requests by the same user do affect each other's content download time. We will refer to this model as the {\it queueing} model.


When a content request arrives at the disk, it enters a first-come-first-serve queue. After a request reaches the head of the queue, it takes some random service time to read the content from the disk. We model this service time a random variable with mean $1/\mu$. Here again, the download time is the sum of two components: the waiting time in queue and the service time required to read from the disk.


\section{Reducing Delay by Coding}
\label{sec:main_idea}
In both of our models, the download completion time is a random variable. One natural way to reduce this time is to replicate the content across $n$ independent servers (or disks). Then if the user issues $n$ requests, one to each of the $n$ servers, it only needs to wait for the one of the requests to be served. This strategy gives a sharp reduction in download time, but at the cost of $n$ times more storage space and the cost of processing multiple requests.


We thus argue that it is more economical to divide the content into $k$ blocks, encode them into $n>k$ coded blocks, and store them on $n$ different servers (one block per server). Each incoming request is sent to all $n$ servers, and the content can be recovered when any $k$ out of $n$ blocks are successfully downloaded.

This can be achieved  by using an $(n,k)$ maximum distance separable (MDS) code to encode the $k$ blocks of content. 
MDS codes have been suggested to provide reliability against server outages (or disk failures). In this paper we show that, in addition to error-correction, we can exploit these codes to reduce the download time of the content.



Note that for the fountain model, since multiple users can simultaneously access the content,  the response times (waiting plus delivery time) of the $n$ servers are independent. The download time is the $k^{th}$ order statistic of the response time of each server. However, the analysis of download time for the queueing model is more challenging because the the response times of the $n$ queues are not independent.


Since in both models we require the first $k$ out of $n$ blocks to be downloaded, we now provide some background of finding the $k^{th}$ order statistic of $n$ independent and identically distributed (i.i.d) random variables. For a more complete treatment in order statistics, please refer to \cite{order_stat}.  

Let $X_1, \,\, X_2, \,\, \cdots \, X_n$ be i.i.d.\ random variables. Then, $X_{k,n}$, the $k^{th}$ order statistic of $X_i$ , $ 1 \leq i \leq n$ , or the $k^{th}$ smallest variable has the distribution,
\begin{align}
f_{X_{k,n}}(x) &= n \binom{n-1}{k-1} F_{X}(x)^{k-1}(1-F_{X}(x))^{n-k}f_{X}(x)
\end{align}
where $F_X(x)$ and $f_X(x)$ and the distribution and density functions of $X$ respectively. In particular, if $X_i$'s are exponential with mean $1/\mu$, then the expectation and variance of order statistic $X_{k,n}$ are given by,
\begin{align}
\label{eq.orderstat_mean}
\expec[X_{k,n}] &= \frac{1}{\mu}\sum_{i=1}^{k} {\frac{1}{n-k+i}} = \frac{1}{\mu} (H_n - H_{n-k}) \\
\label{eq.orderstat_var}
\var[X_{k,n}] &= \frac{1}{\mu^2} \sum_{i=1}^{k} {\frac{1}{(n-k+i)^2}} = \frac{1}{\mu^2} (H_{n^2} - H_{(n-k)^2}),
\end{align}
where $H_n$ and $H_{n^2}$ are generalized harmonic numbers defined as
\begin{align}
\label{eq.harmonic_no}
H_n =  \sum_{j=1}^n \frac{1}{j} ~~ \text{and} ~~ H_{n^2} = \sum_{j=1}^n \frac{1}{j^2}.
\end{align}

Note that $\expec[X_{k,n}]$ decreases as $k$ decreases for a given $n$. This fact will provide us some intuition to understand the analysis of download time for the fountain and queueing models presented in Section~\ref{sec:multiple_fountain} and Section~\ref{sec:fork_join} respectively. 


\section{Multiple Fountains}
\label{sec:multiple_fountain}

In this section we investigate the redundancy storage in the context of fountain model. Content requests (
e.g., request for videos, news or other information) from customers are sent to a network of servers. 
We focus in particular on {\em multiple fountain} content retrieval systems defined as follows.
\begin{defn}[$(n, k)$ multiple fountain]
An $(n, k)$ multiple fountain content retrieval system contains $n$ servers. 
Every content request entering the system is forked to $n$ servers. Requests are served as soon as the
content becomes available, which happens at a random time independently of request arrivals.
A request is satisfied when any $k$ out of $n$ servers have responded and delivered their messages.
\end{defn}

Recall that the content is divided into $k$ blocks and encoded into $n > k$ coded blocks which are stored on $n$ servers 
(one block per server). Content is said to be downloaded when any $k$ out of $n$ blocks are successfully delivered to the user. 
The fountain model described in Section~\ref{sec.fountain_model} assumes that multiple users can access the server simultaneously 
(i.e, no queueing). The response time for each server is the sum of waiting time for the content to become available and the time taken 
to deliver each content block. We model the waiting time $W$ as an exponentially distributed random variable with mean $1/\mu$. 
After the content becomes available, the server delivers it to the customer in constant time $\frac{D}{k}$, 
where the factor $1/k$ appears because each server only delivers $1/k$ fraction of the content.

The mean response time, i.e., the time taken to download the content, from the $(n, k)$ multiple fountain system is given by the following theorem.
\begin{thm}
\label{tm.MFMainTheorem}
The mean response time $T_{(n,k)}$ of a content retrieval system is
\begin{align}
\label{eq.MFMeanResponseTime}
T_{(n,k)} = \frac{1}{\mu}(H_n - H_{n-k}) + \frac{D}{k},
\end{align}
where $H_n$ is defined in (\ref{eq.harmonic_no}).
\end{thm}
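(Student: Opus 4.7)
The plan is to reduce the theorem to a direct application of the order statistic formula already recalled in (\ref{eq.orderstat_mean}). First I would write the response time of server $i$ as $R_i = W_i + D/k$, where $W_i$ is the exponential waiting time with mean $1/\mu$ and $D/k$ is the deterministic delivery time for the server's coded block. By the fountain assumption, multiple users do not interfere on a given server, so the $W_i$ (and hence the $R_i$) are mutually independent across the $n$ servers, and they are identically distributed because the coded blocks all have the same size and each $W_i \sim \text{Exp}(\mu)$.

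Next I would observe that the download terminates as soon as the fastest $k$ of the $n$ servers have responded, so the download time is the $k$-th order statistic $R_{k,n}$. The shift $D/k$ is the same deterministic constant added to every $W_i$, and a common constant shift preserves the ranking of a collection of random variables. Consequently $R_{k,n} = W_{k,n} + D/k$, and taking expectations gives $T_{(n,k)} = \expec[W_{k,n}] + D/k$.

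Finally I would substitute the closed-form expression (\ref{eq.orderstat_mean}) for the expected $k$-th order statistic of $n$ i.i.d.\ $\text{Exp}(\mu)$ variables, namely $\expec[W_{k,n}] = \frac{1}{\mu}(H_n - H_{n-k})$, which yields the claimed formula.

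There is essentially no deep obstacle here; the content of the argument is really a bookkeeping step. The only point that deserves explicit mention is why the cross-server response times are independent: this hinges on the fountain model's simultaneous-broadcast property, which prevents the queueing coupling that complicates the analogous fork-join analysis in Section~\ref{sec:fork_join}. Once independence is in hand, the constant-shift observation and the stock formula for exponential order statistics deliver the result immediately.
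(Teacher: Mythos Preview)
Your proposal is correct and follows essentially the same argument as the paper's own proof: decompose the per-server response time into the exponential waiting component and the constant delivery shift $D/k$, then apply the known expectation of the $k$-th exponential order statistic. Your version is slightly more explicit (you justify $R_{k,n}=W_{k,n}+D/k$ via the rank-preserving constant shift and spell out the source of cross-server independence), but the route and key step are identical.
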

\begin{proof}
Since each message is delivered to the customer in constant time $\frac{D}{k}$, the mean response time for a request equals the waiting time until the content becomes available at $k$ servers plus the delivery time $\frac{D}{k}$. The expected waiting time is the $k^{th}$ order statistic of $n$ i.i.d.~exponential random variables with mean $1/\mu$, which  has mean $\frac{1}{\mu}(H_n - H_{n-k})$ (c.f.~(\ref{eq.orderstat_mean})). 
\end{proof}

We notice that it is possible to have an optimal $k$ such that (\ref{eq.MFMeanResponseTime}) is minimized. The intuition behind this is the trade-off between the waiting time $\frac{1}{\mu}(H_n - H_{n-k})$ and the content delivery time $\frac{D}{k}$, as $k$ varies from $1$ to $n$. When $k$ is small, the $T_{(n,k)}$ is dominated by the delivery time $\frac{D}{k}$. But as $k$ increases, $T_{(n,k)}$ is dominated by the waiting time due to the increase in $\frac{1}{\mu}(H_n - H_{n-k})$, and decrease in $\frac{D}{k}$. The following lemma gives the optimal value of $k$.

\begin{lem}
The $k$ that minimizes (\ref{eq.MFMeanResponseTime}) is given by
\begin{align}
\label{eq.MFOptimalk}
k^* = \argmin_k T_{(n,k)} \approx
\left \lceil \frac{-D \mu+ \sqrt{D^2\mu^2 + 4nD\mu}}{2} \right \rceil.
\end{align}
\end{lem}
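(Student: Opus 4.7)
The plan is to relax $k$ to a positive real variable, approximate the harmonic difference by a logarithm, minimize via calculus, and then round up to the nearest integer to recover the integrality constraint.

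First, I would write the objective as a function of a continuous variable $k \in (0,n)$ and use the approximation
\begin{equation*}
H_n - H_{n-k} \;\approx\; \ln n - \ln(n-k) \;=\; \ln\!\frac{n}{n-k},
\end{equation*}
which follows from $H_m = \ln m + \gamma + O(1/m)$. Substituting into \eqref{eq.MFMeanResponseTime} gives the smooth proxy objective
\begin{equation*}
\widetilde{T}(k) \;=\; \tfrac{1}{\mu}\ln\!\tfrac{n}{n-k} + \tfrac{D}{k}.
\end{equation*}

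Next, I would check convexity (so that a stationary point is indeed the global minimum on $(0,n)$). Both $\ln(n/(n-k))$ and $D/k$ are convex functions of $k$ on $(0,n)$ — the first since $\frac{d^2}{dk^2}\ln\frac{n}{n-k} = \frac{1}{(n-k)^2} > 0$, the second trivially — so $\widetilde{T}$ is convex. I would then differentiate and set the derivative to zero:
\begin{equation*}
\widetilde{T}'(k) \;=\; \frac{1}{\mu(n-k)} - \frac{D}{k^2} \;=\; 0,
\end{equation*}
which, after clearing denominators, yields the quadratic
\begin{equation*}
k^2 + D\mu\, k - D\mu\, n \;=\; 0.
\end{equation*}
The unique positive root is
\begin{equation*}
k \;=\; \frac{-D\mu + \sqrt{D^2\mu^2 + 4nD\mu}}{2},
\end{equation*}
and this root lies in $(0,n)$ because the constant term of the quadratic is $-D\mu n < 0$ while its value at $k=n$ is $n^2 > 0$. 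Finally, since $k$ must be a positive integer and the continuous objective is convex (hence unimodal on the integers to a first approximation), I would round up using $\lceil\cdot\rceil$ to obtain \eqref{eq.MFOptimalk}.

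The main obstacle, and the reason the lemma states only an approximation ($\approx$), is that both the continuous relaxation and the logarithmic approximation of $H_n - H_{n-k}$ introduce error; the integer minimizer of the exact discrete expression in \eqref{eq.MFMeanResponseTime} may differ from $\lceil k\rceil$ by one in edge cases (for instance when $n-k$ is small and the approximation $H_n - H_{n-k}\approx \ln(n/(n-k))$ is least accurate). A fully rigorous version would compare consecutive integer values of $T_{(n,k)}$ near the continuous optimum using the exact identity $(H_n-H_{n-k-1}) - (H_n-H_{n-k}) = 1/(n-k-1)$, but for the purposes of this lemma the convex continuous relaxation plus ceiling suffices to identify the operating point up to an additive unit error.
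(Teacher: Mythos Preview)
Your argument is correct and follows essentially the same route as the paper: approximate $H_n - H_{n-k}$ by $\ln(n/(n-k))$, differentiate the resulting smooth objective, solve the quadratic $k^2 + D\mu k - D\mu n = 0$, and take the ceiling of the positive root. Your additions (convexity of the proxy, verification that the root lies in $(0,n)$, and the caveat about edge-case rounding error) go beyond what the paper actually states but are consistent with it.
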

\begin{proof}
We use the log approximation for $H_n$, i.e., $H_n \approx \log (n) + O(1)$ and $H_{n-k} \approx \log (n-k) + O(1)$. Substitute in (\ref{eq.MFMeanResponseTime}) we obtain
\begin{align*}
T_{(n,k)} = \frac{1}{\mu}\log \left (\frac{n}{n-k} \right ) + \frac{D}{k}.
\end{align*}
Taking derivative with respect to $k$ and set it to $0$ we obtain
\begin{align*}
\frac{1}{\mu}k^2 + Dk - Dn = 0,
\end{align*}
which has root $k^* = \frac{-D \mu + \sqrt{D^2 \mu^2+ 4nD\mu}}{2}$ as in (\ref{eq.MFOptimalk}).
%
\end{proof}
Fig.~\ref{fig.sim_multiple_Fountain} shows the mean response time 
$T_{10,k}$ versus $k$ for the $(10, k)$ multiple fountain system with parameters delivery time $D =5$ and various values of mean waiting time $1/\mu$. 
We observe that the optimal value of $k^*$ increases with $1/\mu$.
\begin{figure}[hbt]
	\includegraphics[scale=0.6]{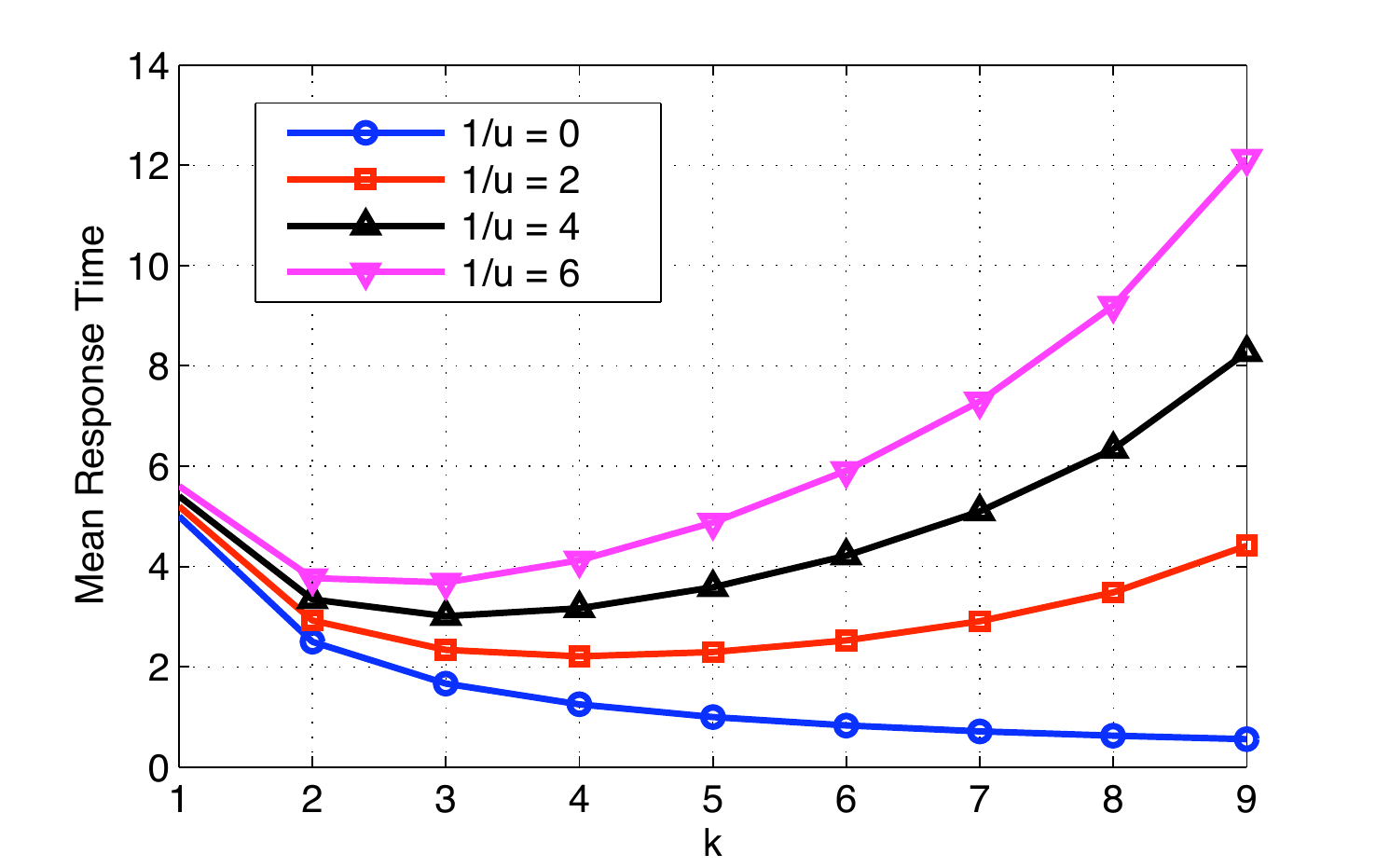}
	\caption{Mean response time simulation for $(10, k)$ multiple fountain system with parameters $\frac{1}{\mu} = 0, 2, 4, 6$ and $D =5$. Note that $1/\mu = 0$ means that the content is immediately available thus the download time decreases as $k$ increases. On the contrary, when $D=0$ (not shown in this plot), the download time goes down as $k$ decreases. }
	\label{fig.sim_multiple_Fountain}
\end{figure}

\section{Fork-Join Queues}
\label{sec:fork_join}

In this section we consider the second content delivery model described in Section~\ref{sec:main_idea}, the queueing model. We show how coding can help minimize the time taken to download time of a content which is stored on an array of disks. We refer to this time as the response time. Although we focus on this storage model, it is possible to extend our results to other distributed systems such as parallel cluster computing \cite{map_reduce}.

\subsection{System Model}
\label{subsec:sys_model}
Consider that a content $F$ of unit size, divided into $k$ blocks of equal size. It is encoded to $n>k$ blocks using a $(n,k)$ maximum distance separable (MDS) code, and the coded blocks are stored on an array of $n$ disks. MDS codes have the property that any $k$ out of the $n$ blocks are sufficient to reconstruct the entire file.
An illustrative example with $n=3$ disks and $k=2$ is shown in Fig.~\ref{fig:sys_model_example}. The content $F$ is split into equal blocks $a$ and $b$, and stored on $3$ disks as $a$, $b$, and $a \oplus b$, the exclusive-or of blocks $a$ and $b$. Thus each disk stores content of half the size of file $F$. Downloads from any $2$ disks jointly enable reconstruction of $F$.
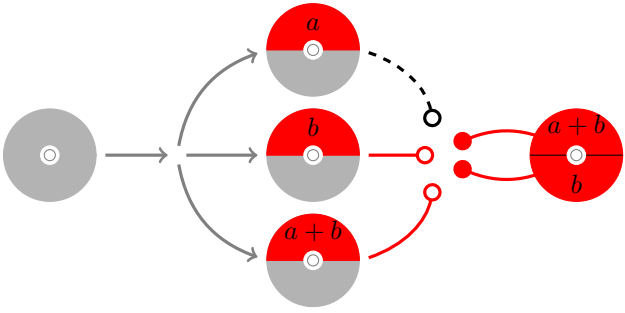
\begin{figure}[hbt]
\label{fig:sys_model_example}
\begin{center}
\begin{tikzpicture} [scale=0.7]
\def\myangle{180};
\def\dr{25pt}
\def\hr{3pt}
\def\tr{5pt}
\coordinate (A) at (-5,2);
\draw [black!30, - , fill] (A) -- ($(A)+(\dr,0)$)  arc (0:360:\dr) -- cycle;
\draw [white, fill] (A) circle (\tr);
\draw [black!50] (A) circle (\hr);
\node at ($(A)+(\dr,0)$)(I) {};
\node at ($(A)+(2.75*\dr,0)$)(IF) {};
\foreach \n in {0,2,4} {
\coordinate (A) at (0,\n);
\node at ($(A)-(\dr,0)$)(F-\n) {};
\node at ($(A)+(\dr,0)$)(J-\n) {};
\draw [black!30, - , fill] (A) -- ($(A)+(\dr,0)$)  arc (0:360:\dr) -- cycle;
\draw [red, - , fill] (A) -- ($(A)+(\dr,0)$)  arc (0:\myangle:\dr) -- cycle;
\draw [white, fill] (A) circle (\tr);
\draw [black!50] (A) circle (\hr);
}
\path (I) edge[->, gray, very thick] (IF);
\path (IF) edge[->, bend right, gray, very thick] (F-0);
\path (IF) edge[->, bend left, gray, very thick] (F-4);
\path (IF) edge[->, gray, very thick] (F-2);
\draw ($(0,0)+(0,1.1*\tr)$) node [above] {{$a+b$}};
\draw ($(0,2)+(0,1.1*\tr)$) node [above] {{$b$}};
\draw ($(0,4)+(0,1.1*\tr)$) node [above] {{$a$}};
\coordinate (A) at (5,2);
\draw [red, - , fill] (A) -- ($(A)+(\dr,0)$)  arc (0:360:\dr) -- cycle;
\draw [black] (A) -- ($(A)+(\dr,0)$);
\draw [black] (A) -- ($(A)-(\dr,0)$);
\draw [white, fill] (A) circle (\tr);
\draw [black!50] (A) circle (\hr);
\node[text width=0.15cm, text height=0.5cm] at ($(A)-(2.75*\dr,0)$)(O) {};
\node[text width=0.2cm,text height=0.75cm] at ($(A)-(3*\dr,0)$)(OL) {};
\node at ($(A)-(0.9\dr,0)$)(OJ) {};
\path (OJ) edge[<-*, bend right, red,very thick] (OL);
\path (OJ) edge[<-*, bend left, red,very thick] (OL);
\path (J-0) edge[-o, bend right, red, very thick] (O.south west);
\path (J-4) edge[-o, bend left,very thick,dashed] (O.north west);
\path (J-2) edge[-o, red,very thick] (O);
\draw ($(5,2)+(0,1.1*\tr)$) node [above] {{$a+b$}};
\draw ($(5,2)-(0,1.1*\tr)$) node [below] {{$b$}};
\end{tikzpicture}
\end{center}
\caption{ A $(3, 2)$ fork-join system; storage is $50\%$ higher, but response time (per disk \& overall) is reduced. \label{fig:sys_model_example}}
\end{figure}
%
%
Each user's request for content $F$ is forked to all the $n$ disks. Our objective is to determine the mean response time of the system -- the expected time from the arrival of a request until it finishes service by reading the content from some $k$ of the $n$ disks. In order to evaluate the response time we model it as an $n$-fork $k$-join system which is defined as follows.

\begin{defn}[$(n,k)$ fork-join system]
\label{defn:n_fork_k_join}
An $(n,k)$ fork-join system consists of $n$ processing nodes (fork nodes). Every arriving job is divided into $n$ tasks which are sent
to the queue at each of the $n$ nodes. A task is served when it arrives at the top of its queue. The job departs the system when any $k$ out of $n$ tasks are served by their respective nodes. The remaining $n-k$ tasks abandon their queues and exit the system without receiving service.
\end{defn}
The $(n,n)$ fork-join system, known in literature as fork-join queue, has been extensively studied in, e.g., \cite{kim_agarwal, nelson_tantawi, varki_merc_chen}. However, the $(n,k)$ generalization in Definition~\ref{defn:n_fork_k_join} above has not been previously studied to our best knowledge.

We consider an $(n,k)$ fork-join system where each node represents a disk from which content is being downloaded. Download requests arrive according to a Poisson process with rate $\lambda$ per second. Every request is sent to each of the $n$ disks. The time taken to download one unit of data is exponential with mean $1/\mu$. Since, each disk is requested to provide $1/k$ units of data, the service time for each node is exponentially distributed with mean $1/\mu'$ where $\mu' =  k\mu$. Define the  load factor $\rho \triangleq \lambda/\mu'$.  We assume $\rho > 1$, or equivalently $\mu' > \lambda$ to ensure stability of the queue at each fork node.




\subsection{Bounds on Mean Response Time}
\label{subsec:bounds}


Our objective is to evaluate the mean response time $T_{(n,k)}$ of the $(n,k)$ fork-join system described in Section~\ref{subsec:sys_model}. It is the time from the arrival of a job until $k$ out of $n$ of its tasks are served by their respective nodes. 

Even for the $(n,n)$ system, the mean response time has not been found in closed form -- only bounds are known. An exact expression for the response time is found only for the $(2,2)$ fork-join in \cite{nelson_tantawi}. The reason why the fork-join system is harder to analyze than a set of parallel independent $M/M/1$ queues is that each incoming job is sent to the $n$ queues. Hence, the arrivals to the queues are perfectly synchronized and the response times of the $n$ queues are correlated.

The simplest case of an $(n,k)$ fork-join system is the $(n,1)$ system. It is not hard to see that this system behaves exactly as an $M/M/1$ queue with arrival rate $\lambda$ and service rate $\mu'=n \mu$. Therefore its response time is exponential with the mean $T_{(n,1)}$ equal
to $1/(n\mu - \lambda)$. It is difficult to evaluate $T_{(n,k)}$ exactly for other cases, but the bounds we derive below are fairly tight.
%

\begin{thm}[Upper Bound on Mean Response Time]
\label{thm:upper_bnd}
The mean response time $T_{(n,k)}$ of an $(n,k)$ fork-join system satisfies
\begin{align}
\label{eqn:upper_bnd}
T_{(n,k)} \leq  & \, \frac{H_n - H_{n-k}}{\mu'}\,  + \\
& \, \frac{\lambda\bigl [(H_{n^2} - H_{ (n-k)^2 }) + (H_n - H_{(n-k)})^2\bigr]}{2 \mu'^2 \bigl[ 1- \rho(H_n - H_{n-k})\bigr]} \nonumber
\end{align}
where $\lambda$ is the request arrival rate, $\mu' = k \mu$ is the service rate at each queue, $\rho = \lambda/\mu'$ is the load factor,
and the generalized harmonic numbers $H_n$ and $H_{n^2}$ are as given in (\ref{eq.harmonic_no}).
\end{thm}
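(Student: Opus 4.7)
The plan is to upper-bound the mean response time of the $(n,k)$ fork-join system by that of a simpler auxiliary queue whose sojourn time admits a closed-form expression via the Pollaczek-Khinchine formula. The auxiliary system is the \emph{split-merge} variant of the $(n,k)$ system: a newly arriving job is not allowed to begin service at any of the $n$ queues until the job currently in service has completed (i.e., until $k$ of its $n$ tasks have finished). This effectively collapses the $n$ parallel queues into a single aggregated server that processes one job at a time.

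The first step is to argue by a sample-path coupling that the response time in the split-merge system stochastically dominates the response time in the $(n,k)$ fork-join system. Under identical arrival epochs and identical realizations of service times at each node, the fork-join system is never slower, because its queues are free to begin working on later jobs' tasks as soon as the preceding task at that queue finishes, whereas the split-merge variant forces them to idle until the preceding job as a whole finishes. This coupling transfers to the expectations, giving $T_{(n,k)} \le T^{\mathrm{SM}}_{(n,k)}$.

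Once the dominance is in place, the auxiliary system is an ordinary M/G/1 queue with Poisson arrivals at rate $\lambda$ and service time $S$ equal to the $k$-th order statistic of $n$ i.i.d.\ exponentials with rate $\mu'$. The first two moments of $S$ follow immediately from (\ref{eq.orderstat_mean}) and (\ref{eq.orderstat_var}):
\begin{align*}
\expec[S] &= \frac{H_n - H_{n-k}}{\mu'}, \\
\expec[S^2] &= \var(S) + \expec[S]^2 = \frac{H_{n^2} - H_{(n-k)^2}}{\mu'^2} + \frac{(H_n - H_{n-k})^2}{\mu'^2}.
\end{align*}
Substituting these into the Pollaczek-Khinchine formula $\expec[T] = \expec[S] + \lambda\,\expec[S^2] / \bigl(2(1 - \lambda\,\expec[S])\bigr)$ and using $\lambda\,\expec[S] = \rho (H_n - H_{n-k})$ reproduces (\ref{eqn:upper_bnd}) exactly; the stability condition $\rho (H_n - H_{n-k}) < 1$ keeps the denominator positive.

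I expect the main obstacle to be making the coupling argument in the first step fully rigorous: one must track, job by job and node by node, the times at which tasks enter and leave service and verify that no task can be made available earlier under split-merge than under fork-join. Once that comparison is established, the remainder of the proof is a routine substitution of the order-statistic moments (\ref{eq.orderstat_mean})--(\ref{eq.orderstat_var}) into the standard Pollaczek-Khinchine expression, with no additional estimates needed.
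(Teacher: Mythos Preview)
Your proposal is correct and follows essentially the same approach as the paper: introduce the $(n,k)$ split-merge system as a dominating auxiliary queue, identify it with an $M/G/1$ queue whose service time is the $k$-th order statistic of $n$ i.i.d.\ exponentials, and apply the Pollaczek--Khinchine formula with the moments from (\ref{eq.orderstat_mean})--(\ref{eq.orderstat_var}). Your coupling sketch is actually more explicit than the paper's one-line justification for the dominance, so there is nothing to add.
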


\begin{proof}
We use a related, but easier to analyze queueing model called the split-merge system, to find this upper bound on $T_{(n,k)}$. In the $(n,k)$ fork-join queueing model, after a node serves one of the tasks, it is free to process the next task in its queue. On the contrary, in the split-merge model, all $n$ nodes are blocked until $k$ of them finish service. Thus, the job departs all the queues at the same time.
Since the nodes are not blocked in the fork-join system, the mean response time of the $(n,k)$ split-merge model is an upper bound on (and a pessimistic estimate of) $T_{(n,k)}$ for the $(n,k)$ fork-join system.

The $(n,k)$ split-merge system is equivalent to an $M/G/1$ queue where arrivals are Poisson with rate $\lambda$ and service time is a random variable $S$ distribution according to the $k^{th}$ order statistic of the exponential distribution.
The mean and variance of $S$ are (c.f.~(\ref{eq.orderstat_mean}) and (\ref{eq.orderstat_var}))
\begin{equation}
\expec[S] = \frac{H_n - H_{n-k}}{\mu'}  ~~ \text{and} ~~  \var[S] = \frac{H_{n^2} - H_{(n-k)^2}}{\mu'^2}.
\label{eqn:mvk_exp}
\end{equation}
The Pollaczek-Khinchin formula \cite{dsp_gallager} gives the mean response time $T$ of an $M/G/1$ queue in terms of the mean and variance of $S$ as follows.
\begin{equation}
T = \expec[S] + \frac{ \lambda \expec[S^2]} {2( 1-\lambda \expec[S])} \label{eqn:pollac_khin}
\end{equation}
where the second moment $\expec[S^2] = \var[S]+ \expec[S]^2$. Substituting the values of $\expec[S]$ and $\var[S]$ given by (\ref{eqn:mvk_exp}),
we get the upper bound (\ref{eqn:upper_bnd}).
\end{proof}

\begin{rem}
Note that the approach used in \cite{nelson_tantawi} to find an upper bound on the mean response time of the $(n,n)$ fork-join system cannot be extended to the $(n,k)$ fork-join system considered here. The authors in \cite{nelson_tantawi} show that the response times of the $n$ queues form a set of associated random variables \cite{assoc_rand_vars}. Associated random variables have the property that their expected maximum is less than that for independent variables with the same marginal distributions. Thus, the mean response time of the $(n,n)$ fork-join system is upper bounded by that of the system of $n$ independent $M/M/1$ queues. However, this property of associated variables does not hold for the $k^{th}$ order statistic for $k<n$.
\end{rem}

\begin{thm}[Lower Bound on Mean Response Time]
\label{thm:lower_bnd}
The mean response time $T_{(n,k)}$ of an $(n,k)$ fork-join queueing system satisfies
\begin{equation}
T_{(n,k)} \geq \frac{1}{\mu'} \bigl[ H_n - H_{n-k} + \rho (H_{n(n-\rho)} - H_{(n-k)(n-k-\rho)}) \bigr] \label{eqn:lower_bnd}
\end{equation}
where $\lambda$ is the request arrival rate, $\mu' = k \mu$ is the service rate at each queue, $\rho = \lambda/\mu'$ is the load factor,
and the generalized harmonic number $H_{n(n-\rho)}$ is given by
\begin{equation*}
H_{n(n-\rho)} = \sum_{j=1}^n \frac{1}{j(j-\rho)}.
\end{equation*}
\end{thm}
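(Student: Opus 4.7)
The plan begins with a simplification of the right-hand side of (\ref{eqn:lower_bnd}). Term by term, $\frac{1}{j} + \frac{\rho}{j(j-\rho)} = \frac{1}{j-\rho}$, so summing this identity over $j = n-k+1, \ldots, n$ gives
\begin{equation*}
\frac{1}{\mu'}\left[(H_n - H_{n-k}) + \rho (H_{n(n-\rho)} - H_{(n-k)(n-k-\rho)})\right] = \sum_{j=n-k+1}^{n} \frac{1}{j\mu' - \lambda}.
\end{equation*}
Each summand is the mean sojourn time of an $M/M/1$ queue with Poisson arrivals at rate $\lambda$ and exponential service at rate $j\mu'$, which strongly suggests that the target lower bound is the mean sojourn time through a suitably chosen tandem of $M/M/1$ queues.

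Next I would identify that comparison system as a tandem of $k$ $M/M/1$ queues in which the $i$-th stage has service rate $(n-i+1)\mu'$ for $i=1,\ldots,k$. Since arrivals to the first stage are Poisson($\lambda$), Burke's theorem gives that the departure process from each stage is again Poisson($\lambda$), so the stages behave as $k$ independent $M/M/1$ queues and the total mean sojourn time through the tandem is exactly $\sum_{j=n-k+1}^{n} 1/(j\mu'-\lambda)$. Conceptually, stage $i$ is chosen to mimic the rate $(n-i+1)\mu'$ at which an uncontested $(n,k)$ fork-join produces its $i$-th task completion, namely the minimum of $n-i+1$ independent $\text{Exp}(\mu')$ service times.

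The crucial remaining step is to show that the tandem is no slower than the actual $(n,k)$ fork-join, i.e., that its mean response time is a valid lower bound on $T_{(n,k)}$. I would attempt this by a sample-path coupling on a common Poisson arrival stream: in the tandem, once a job has accumulated $i$ completions it competes only with jobs simultaneously in stage $i+1$, whereas in the fork-join its remaining $n-i$ tasks must share queues with every other job in the system, including those that have not yet had any task served. This hints that the tandem enjoys strictly less queue interference. As a sanity check, at $k=1$ the tandem collapses to a single $M/M/1$ with service rate $n\mu'$ and the bound becomes $1/(n\mu' - \lambda)$, coinciding with the exact response time $T_{(n,1)} = 1/(n\mu' - \lambda)$ stated just before Theorem~\ref{thm:upper_bnd}.

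The main obstacle is making this stage-by-stage stochastic comparison rigorous. The association-of-random-variables argument of Nelson and Tantawi for the $(n,n)$ fork-join does not extend to $k$-th order statistics (as noted in the remark following Theorem~\ref{thm:upper_bnd}), so a direct coupling that tracks both systems explicitly is needed. Carefully defining what ``stage $i$'' means for the original fork-join, for example by counting task completions of the tagged job and of the jobs ahead of it in each queue, would be the delicate part of the argument, while once the coupling is in place the desired inequality on expected response times follows by linearity.
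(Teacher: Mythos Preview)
Your approach is essentially the same as the paper's: both decompose the sojourn of a tagged job into $k$ ``stages,'' where stage $j$ (for $j=0,\ldots,k-1$) is the period between its $j$-th and $(j{+}1)$-th task completions, and both bound the time in stage $j$ below by the sojourn time of an $M/M/1$ queue with arrival rate $\lambda$ and service rate $(n-j)\mu'$, yielding exactly the sum $\sum_{j=0}^{k-1} 1/((n-j)\mu'-\lambda)$ that you derived. The algebraic simplification you open with is precisely the one the paper runs in reverse at the end of its proof.

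The framing differs slightly. You package the comparison system as an explicit tandem of $M/M/1$ queues and invoke Burke's theorem to justify Poisson inputs at each stage, then worry about a sample-path coupling. The paper argues more directly about the original fork-join: because arrivals are synchronized across all $n$ queues, any job $B_1$ that has completed more tasks than $B_2$ is necessarily ahead of $B_2$ in every queue where both are still pending, so in stage $j$ the tagged job has \emph{at most} $n-j$ tasks at the heads of their queues, and hence the instantaneous rate of its next completion is at most $(n-j)\mu'$. From this service-rate bound the paper passes (citing \cite{varki_merc_chen} and the memorylessness of the exponential) straight to the $M/M/1$ sojourn time $1/((n-j)\mu'-\lambda)$ for each stage. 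Your concern about making the stochastic comparison airtight is legitimate, but the paper treats it at the same informal level you flag as an obstacle; the missing ingredient in your write-up is simply the ``earlier jobs are always ahead in every remaining queue'' observation that pins down why the stage-$j$ completion rate cannot exceed $(n-j)\mu'$.
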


\begin{proof}
The lower bound in (\ref{eqn:lower_bnd}) is a generalization of the bound for the $(n,n)$ fork-join system derived in \cite{varki_merc_chen}. The bound for the $(n,n)$ system is derived by considering that a job goes through $n$ stages of processing. A job is said to be in the $j^{th}$ stage, for $0 \leq j \leq n-1$, if $j$ out of $n$ tasks have been served by their respective nodes and the remaining $n-j$ tasks are waiting to be served. The job will depart the system when all $n$ tasks are served.

For the $(n,k)$ fork-join system, since we only need $k$ tasks to finish service, the number of stages of processing is reduced. Each job now goes through $k$ stages of processing, where in the $j^{th}$ stage, for $0 \leq j \leq k-1$, $j$ tasks have finished processing and we are waiting for the $k-j$ more tasks to finish service in order to complete the job.

Consider two jobs $B_1$ and $B_2$ in the $i^{th}$ and $j^{th}$ stages of processing respectively.
Let $i>j$, or in other words, $B_1$ has completed more tasks than $B_2$. Since every incoming job is sent to all $n$ queues, this implies that $B_1$'s tasks will be in front $B_2$'s in all $n-i$ queues remaining to be served for $B_1$. Further, we can conclude that the mean service rate of job $B_2$ moving to the $(j+1)^{th}$ stage of processing is at most $(n-j)\mu'$. If the $n-j$ pending tasks are at the head of all the respective queues, then the service rate will be exactly $(n-j) \mu'$. However, $B_1$'s task could be ahead of $B_2$'s in one of the $n-j$ pending queues, due to which that task of $B_2$ cannot be immediately served. Hence, we have shown that for a job in the $j^{th}$ stage of processing, the mean service rate is at most $(n-j)\mu'$.

Consider an $M/M/1$ queue with arrival rate $\lambda$ and service rate $(n-j) \mu'$. Its response time is exponentially distributed with mean $T_j = 1/( (n-j)\mu' - \lambda)$. By the memory-less property of the exponential distribution, the total mean response time is the sum of the mean response times of each of the $k$ stages of processing, given by
\begin{align*}
T_{(n,k)} &\geq \sum_{j=0}^{k-1} \frac{1}{(n-j)\mu' - \lambda}
= \frac{1}{\mu'}  \sum_{j=0}^{k-1} \frac{1}{ (n-j) - \rho} \\
&= \frac{1}{\mu'}  \sum_{j=0}^{k-1} \Bigl[ \frac{1}{n-j} + \frac{\rho}{(n-j)(n-j-\rho)} \Bigr] \\
&= \frac{1}{\mu'} \bigl[ H_n - H_{n-k} + \rho \cdot (H_{n(n-\rho)} - H_{(n-k)(n-k-\rho)}) \bigr]
\end{align*}

\end{proof}

Hence, we have found lower and upper bounds on the mean response time $T_{(n,k)}$. In Section~\ref{subsec:fj_results}, we perform simulations to check the tightness of these bounds. These help us answer some practical questions in designing storage systems with the minimum download completion time.

\subsection{Extension to General Service Time Distribution}
\label{subsec:fj_general_service_time}
In this section we derive the upper bound on expected download time with a general service time distribution at each node, instead of the exponential service time considered so far. Let $X_1, X_2, \ldots, X_n$ be the i.i.d random variables representing the service times of the $n$ nodes, with expectation $\expec[X_i] = \frac{1}{\mu'}$ and variance $\var[X_i] = \sigma^2$ for all $i$.

\begin{thm}[Upper Bound with General Service Time]
The mean response time $T_{n,k}$ of an $(n, k)$ fork-join system with general service time $X$ such that $\expec[X] = \frac{1}{\mu'}$ and $\var[X] = \sigma^2$ satisfies
\begin{align}
T_{(n,k)} \leq \frac{1}{\mu'} &+ \sigma \sqrt{\frac{k-1}{n-k+1}} \nonumber \\
&+ \frac{\lambda \left[ \left ( \frac{1}{\mu'} + \sigma \sqrt{\frac{k-1}{n-k+1}} \right)^2 + \sigma^2 C(n,k) \right]}{2 \left[ 1- \lambda \left( \frac{1}{\mu'} + \sigma \sqrt{\frac{k-1}{n-k+1}} \right ) \right ]} \label{eqn:upper_bnd_gen}.
\end{align}
\end{thm}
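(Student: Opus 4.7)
The plan is to mirror the proof of Theorem~\ref{thm:upper_bnd}, replacing the exact moment formulas for exponential order statistics with distribution-free moment bounds. First, I would invoke the split-merge domination argument: the mean response time of the $(n,k)$ fork-join system is upper bounded by that of the $(n,k)$ split-merge system in which all $n$ servers remain blocked until $k$ of them finish service. This reduces the problem to an $M/G/1$ queue whose service time $S$ is distributed as the $k^{th}$ order statistic $X_{(k,n)}$ of $n$ i.i.d.\ samples from $X$, so that the Pollaczek-Khinchin formula applies:
\begin{equation*}
T_{\text{split-merge}} = \expec[S] + \frac{\lambda\bigl(\var[S]+\expec[S]^2\bigr)}{2\bigl(1-\lambda\expec[S]\bigr)}.
\end{equation*}

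Second, I would observe that for fixed $\lambda$ with $\lambda \expec[S]<1$, the right-hand side above is monotonically increasing in both $\expec[S]$ (since a larger mean both grows the numerator and shrinks the denominator $1-\lambda\expec[S]$) and $\var[S]$. Hence any upper bounds on $\expec[S]$ and $\var[S]$ translate directly into an upper bound on $T_{(n,k)}$. For the mean, I would apply the classical distribution-free bound on order statistics (Arnold--Groeneveld / David--Hartley type),
\begin{equation*}
\expec[X_{(k,n)}] \;\leq\; \expec[X] + \sigma\sqrt{\frac{k-1}{n-k+1}} \;=\; \frac{1}{\mu'} + \sigma\sqrt{\frac{k-1}{n-k+1}},
\end{equation*}
which holds for any distribution with finite second moment. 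For the variance, I would invoke the corresponding distribution-free bound $\var[X_{(k,n)}] \leq \sigma^2 C(n,k)$, where $C(n,k)$ is the constant introduced in the theorem statement. Substituting these two inequalities into the Pollaczek-Khinchin expression, using $\expec[S^2]=\var[S]+\expec[S]^2$, yields (\ref{eqn:upper_bnd_gen}) after collecting terms.

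The main obstacle will be justifying that all the inequality directions line up correctly once $\expec[S]$ is simultaneously replaced by its upper bound in the numerator and in the denominator $1-\lambda\expec[S]$; a short monotonicity calculation on $x \mapsto x + \lambda(\text{const}+x^2)/(2(1-\lambda x))$ suffices, and stability is ensured by the implicit hypothesis $\lambda\bigl(1/\mu' + \sigma\sqrt{(k-1)/(n-k+1)}\bigr)<1$. A secondary subtlety is pinning down exactly which distribution-free bound on $\var[X_{(k,n)}]$ produces the constant $C(n,k)$ displayed in the theorem; once that bound is stated and cited, the rest of the proof is a direct substitution into the P-K formula and requires no new probabilistic ideas beyond those already used for Theorem~\ref{thm:upper_bnd}.
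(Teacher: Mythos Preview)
Your proposal is correct and follows essentially the same route as the paper: split-merge domination reduces to an $M/G/1$ queue, the Pollaczek--Khinchin formula is applied, the distribution-free bounds $\expec[S]\le 1/\mu'+\sigma\sqrt{(k-1)/(n-k+1)}$ (Arnold--Groeneveld) and $\var[S]\le C(n,k)\sigma^2$ (Papadatos) are invoked, and monotonicity of the P--K expression in both moments justifies the substitution. The paper's proof is exactly this, with the citations \cite{ArnoldGroeneveld} and \cite{Papadatos} supplying the two moment bounds you anticipated.
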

\begin{proof}
The proof follows from Theorem~\ref{thm:upper_bnd} where the upper bound can be calculated using $(n, k)$ split-merge system and Pollaczek-Khinchin formula (\ref{eqn:pollac_khin}). Unlike the exponential distribution, we do not have an exact expression for $S$, i.e., the $k^{th}$ order statistic of the service times $X_1, \,\, X_2, \,\, \cdots X_n$. Instead, we use the following upper bounds on the expectation and variance of $S$ derived in \cite{ArnoldGroeneveld} and\cite{Papadatos}.
\begin{align}
\expec[S] &\leq \frac{1}{\mu'} + \sigma \sqrt{\frac{k-1}{n-k+1}} \label{eqn:upperbound_exp} \\
\var[S] &\leq C(n,k) \sigma^2, \label{eqn:upperbound_var}
\end{align}

The proof of (\ref{eqn:upperbound_exp}) involves Jensen's inequality and Cauchy-Schwarz inequality. For details please refer to \cite{ArnoldGroeneveld}. The constant $C(n,k)$ depends only on $n$ and $k$, and can be found in the table in \cite{Papadatos}. Holding $n$ constant, $C(n,k)$ decreases as $k$ increases.  The proof of (\ref{eqn:upperbound_var}) can be found in \cite{Papadatos}.

Note that (\ref{eqn:pollac_khin}) strictly increases as either $\expec[S]$ or $\var[S]$ increases. Thus, we can substitute the upper bounds in it to obtain the upper bound on mean response time (\ref{eqn:upper_bnd_gen}).

Finally, we note that our proof in Theorem~\ref{thm:lower_bnd} cannot be extended to this general service time setting. The proof requires memoryless property of the service time, which does not necessary hold in the general service time case.
\end{proof}


\subsection{Numerical Examples and Simulation}
\label{subsec:fj_results}

In this section we present numerical and simulation example results to help us appreciate how storing the content on
$n$ disks using an $(n,k)$
fork-join system (as described in Section~\ref{subsec:sys_model}) reduces the expected download time.
The results demonstrate the tightness of the bounds derived in Section~\ref{subsec:bounds}.
In addition, we simulate the fork-join system to obtain an empirical cumulative density function (CDF) for the download time.

The download time of a file with $k$ blocks can be improved by increasing 1) the storage expansion $n/k$ per file
and/or 2) the number $n$ of disks used for file storage. For example, fork-join systems $(4,2)$ and $(10,5)$ both provide
a storage expansion of $2$, but the former uses $4$ and the latter $10$ disks, and thus their download times behave differently.
Both the total storage and the number of storing elements could be a limiting factor in practice.

We first address the scenario where the number of disks disks $n$ is kept constant, but the storage expansion changes
from $1$ to $n$ as we choose $k$ from $n$ to $1$. We then study the scenario where the storage expansion factor $n/k$ is kept
constant, but the number of disks varies.

\subsubsection{Flexible Storage Expansion \& Fixed Number of Disks}

In Fig.~\ref{fig:tightness_bounds} we plot the mean response time $T_{(n,k)}$ versus $k$ for fixed number of disks $n=10$, arrival rate
$\lambda = 1$ request per second and service rate $\mu = 3$ units of data per second. Each disk stores $1/k$ units of data and thus
the service rate of each individual queue is $\mu' = k \mu$.
\begin{figure}[hbt]
\includegraphics[scale=0.55]{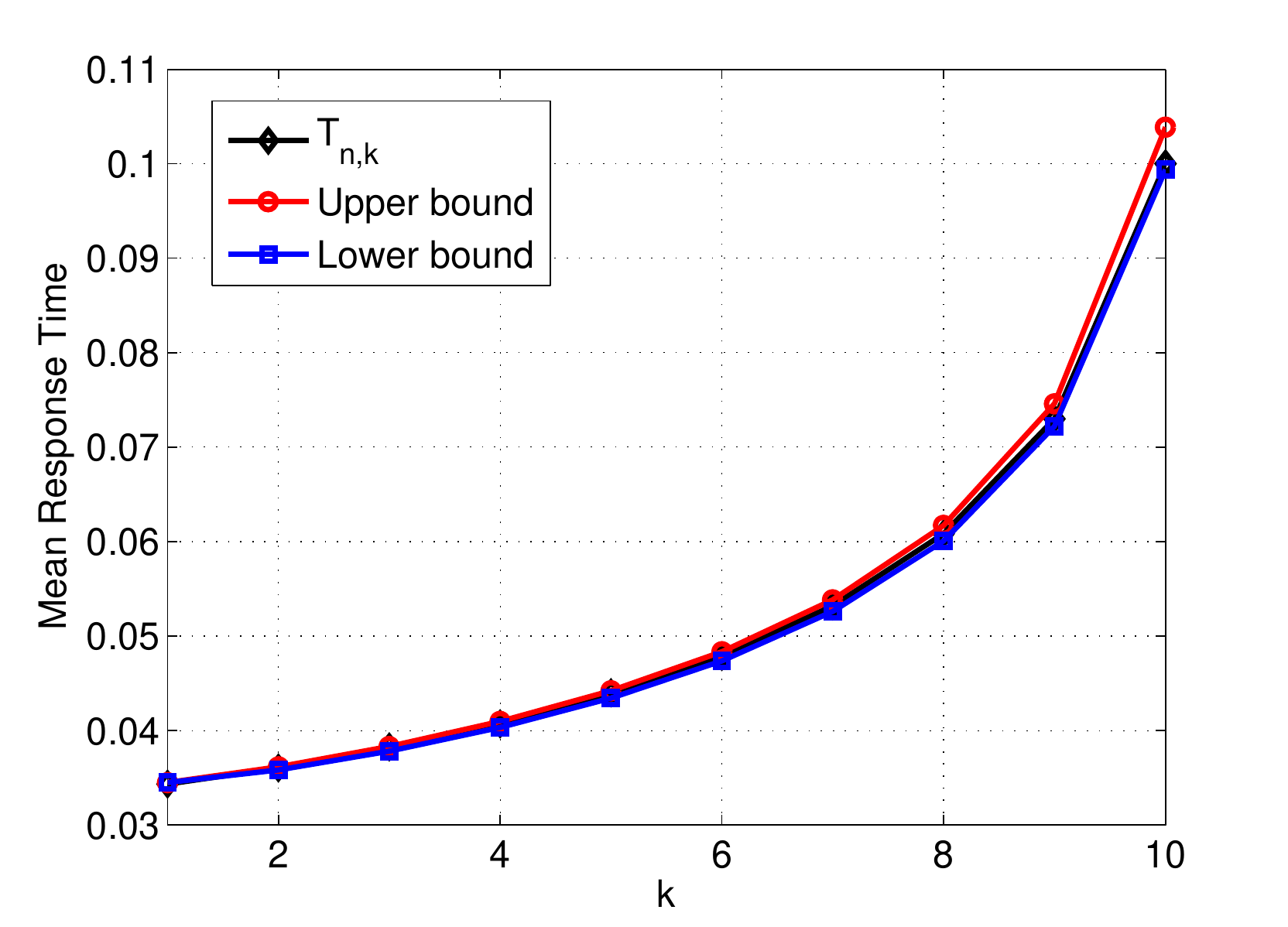}
\caption{Mean response time $T_{(n,k)}$ increases with $k$ for fixed $n$ because the redundancy of coding reduces.
The plot also demonstrates the tightness of the bounds derived in Section~\ref{subsec:bounds}\label{fig:tightness_bounds}}
\end{figure}
The simulation plot shows that as $k$ increases with $n$ fixed, the code
rate $k/n$ increases thus reducing the amount of redundancy. Hence, $T_{(n,k)}$ increases with $k$. We also observe that the bounds (\ref{eqn:upper_bnd}) and (\ref{eqn:lower_bnd}) derived in Section~\ref{subsec:bounds} are very tight.

In addition to low mean response time, ensuring quality-of-service to the user may also require that the probability of exceeding some maximum tolerable response time is small. Thus, we study the CDF of the response time for different values of $k$ for a fixed $n$.

In Fig.~\ref{fig:flexible_storage_cdf} we plot the CDF of the response time with $k= 1, 2 , 5, 10$ for fixed $n=10$. The arrival rate and service rate are $\lambda = 1$ and $\mu = 3$ as defined earlier. For $k=1$, the PDF is represents the minimum of $n$ exponential random variables, which is also exponentially distributed.
\begin{figure}[hbt]
%
\begin{tikzpicture}
\node at (0,0) (P) {\includegraphics[scale=0.5]{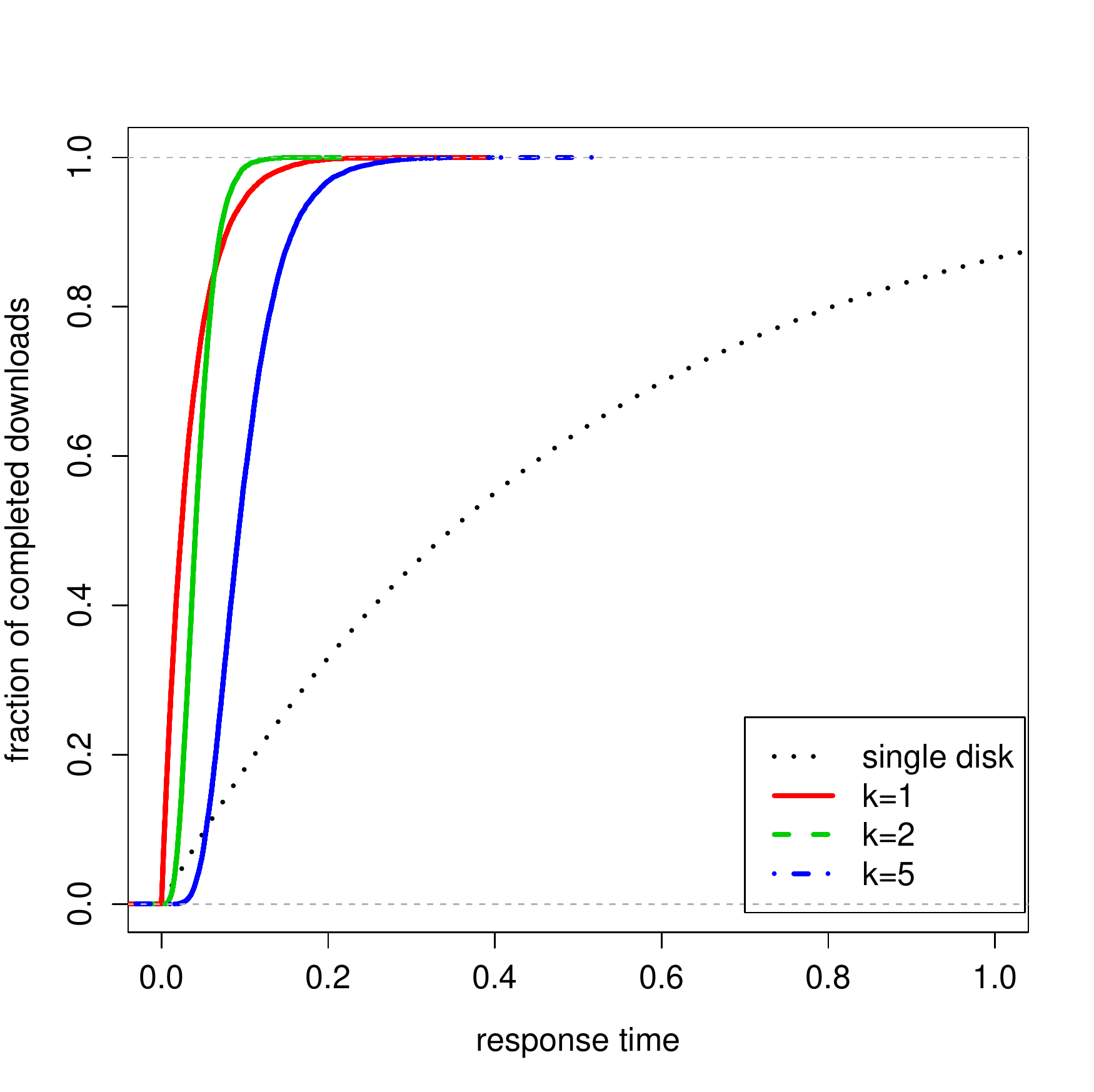}};
\end{tikzpicture}
\\[1mm]
\begin{tikzpicture} [scale=0.35]
\def\myangle{360};
\def\dr{25pt}
\def\hr{3pt}
\def\tr{5pt}
\foreach \n in {0} {
\coordinate (A) at (\n,0);
\draw [black!30, - , fill] (A) -- ($(A)+(\dr,0)$)  arc (0:360:\dr) -- cycle;
\draw [black, - , fill] (A) -- ($(A)+(\dr,0)$)  arc (0:\myangle:\dr) -- cycle;
\draw [white, fill] (A) circle (\tr);
\draw [black!50] (A) circle (\hr);
}
\draw (1.5*\dr,0) node [right] {$\leftarrow$ \small{unit storage requirement per file}};
\end{tikzpicture}
\\[1mm]
\begin{tikzpicture} [scale=0.35]
\def\myangle{36};
\def\dr{25pt}
\def\hr{3pt}
\def\tr{5pt}
\foreach \n in {0,2,4,6,8,10,12,14,16,18} {
\coordinate (A) at (\n,0);
\draw [black!30, - , fill] (A) -- ($(A)+(\dr,0)$)  arc (0:360:\dr) -- cycle;
\draw [blue, - , fill] (A) -- ($(A)+(\dr,0)$)  arc (0:\myangle:\dr) -- cycle;
\draw [white, fill] (A) circle (\tr);
\draw [black!50] (A) circle (\hr);
}
\draw ($(A)+(1.5*\dr,0)$) node [right] {\small{$n/k=1$}};
\end{tikzpicture}
\\[1mm]
\begin{tikzpicture} [scale=0.35]
\def\myangle{72};
\def\dr{25pt}
\def\hr{3pt}
\def\tr{5pt}
\foreach \n in {0,2,4,6,8,10,12,14,16,18} {
\coordinate (A) at (\n,0);
\draw [black!30, - , fill] (A) -- ($(A)+(\dr,0)$)  arc (0:360:\dr) -- cycle;
\draw [green, - , fill] (A) -- ($(A)+(\dr,0)$)  arc (0:\myangle:\dr) -- cycle;
\draw [white, fill] (A) circle (\tr);
\draw [black!50] (A) circle (\hr);
}
\draw ($(A)+(1.5*\dr,0)$) node [right] {\small{$n/k=2$}};
\end{tikzpicture}
\\[1mm]
\begin{tikzpicture} [scale=0.35]
\def\myangle{360};
\def\dr{25pt}
\def\hr{3pt}
\def\tr{5pt}
\foreach \n in {0,2,4,6,8,10,12,14,16,18} {
\coordinate (A) at (\n,0);
\draw [black!30, - , fill] (A) -- ($(A)+(\dr,0)$)  arc (0:360:\dr) -- cycle;
\draw [red, - , fill] (A) -- ($(A)+(\dr,0)$)  arc (0:\myangle:\dr) -- cycle;
\draw [white, fill] (A) circle (\tr);
\draw [black!50] (A) circle (\hr);
}
\draw ($(A)+(1.5*\dr,0)$)  node [right] {\small{$n/k=10$}};
\end{tikzpicture}
\caption{CDFs of the response time of $(10, k)$ fork-join systems, and the required storage\label{fig:flexible_storage_cdf}}
\end{figure}

The CDF plot can be used to design a storage system that gives probabilistic bounds on the response time. For example,
if we wish to keep the response time below $0.1$ seconds with probability at least $0.75$, then the CDF plot shows that $k=5,10$
satisfy this requirement but $k=1$ does not. The plot also shows that at $0.4$ seconds, $100\%$ of requests are complete
in all fork-join systems, but only $50\%$ are complete in the single-disk case

\subsubsection{Flexible Number of Disks \& Fixed Storage Expansion }
Now we take a different viewpoint and analyze the benefit of spreading the content across more disks while using the same total storage space.
Fig.~\ref{fig:T_n_k_with_const_ratio} shows a simulation plot of the mean response time $T_{(n,k)}$ versus $k$ while keeping constant code rate
$k/n= 1/2$.  The response time $T_{(n,k)}$ reduces with increase in $k$ because we get the diversity advantage of having more disks.
\begin{figure}[hbt]
\includegraphics[scale=0.55]{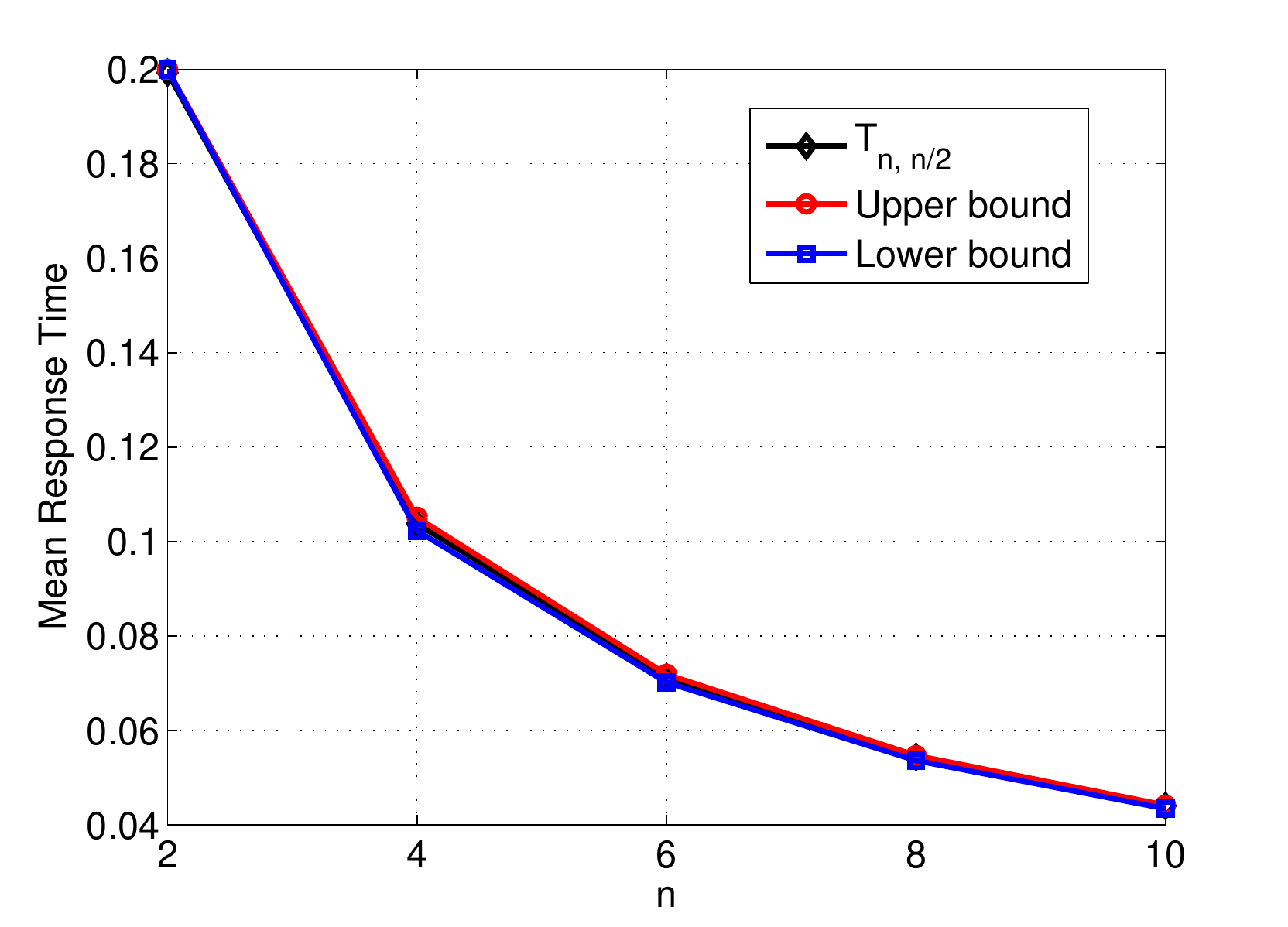}
\caption{The mean response time with constant code rate  \label{fig:T_n_k_with_const_ratio}}
\end{figure}
With a very large $n$, as $k$ increases, the theoretical bounds (\ref{eqn:upper_bnd}) and (\ref{eqn:lower_bnd}) suggest that $T_{(n,k)}$ approaches zero.
This is because we assumed that service rate of a single disk $\mu' = k \mu$ since the $1/k$ units of the content $F$ is stored on one disk.
However, in practice the mean service time $1/\mu'$ will not go zero because reading each disk will need some non-zero processing delay
in completing each task irrespective of the amount of data stored on it.

In order to understand the response time better, we plot its CDF in Fig.~\ref{fig:fixed_storage_cdf} for different values of $k$ for fixed ratio $
k/n = 1/2$. Again we observe that the diversity of increasing number of disks $n$ helps reduce the response time.
\begin{figure}[hbt]
\begin{tikzpicture}
\node at (0,0) (P) {\includegraphics[scale=0.5]{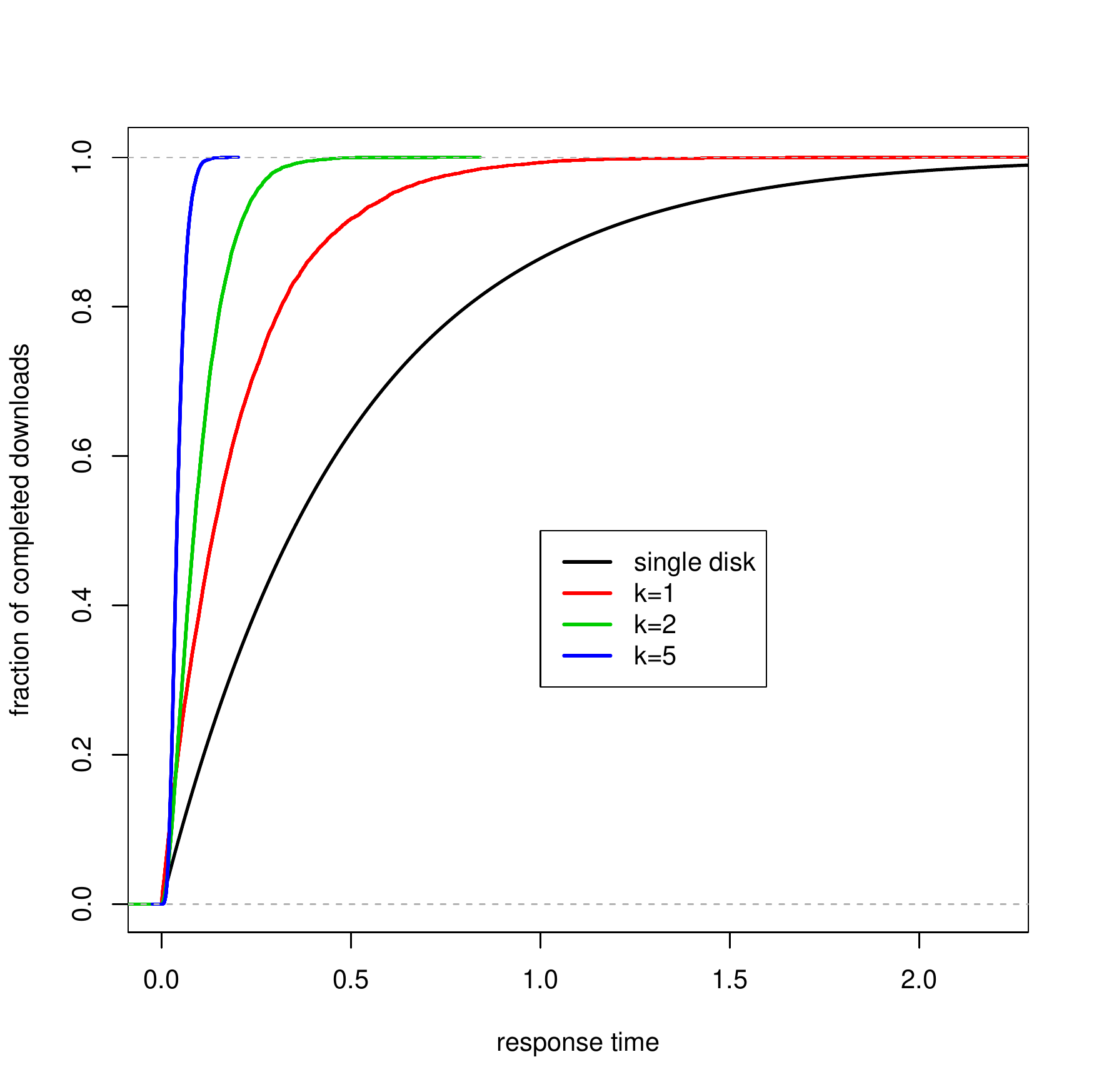}};
\end{tikzpicture}
\begin{tikzpicture} [scale=0.35]
\def\myangle{360};
\def\dr{25pt}
\def\hr{3pt}
\def\tr{5pt}
\foreach \n in {0} {
\coordinate (A) at (\n,0);
\draw [black!30, - , fill] (A) -- ($(A)+(\dr,0)$)  arc (0:360:\dr) -- cycle;
\draw [black, - , fill] (A) -- ($(A)+(\dr,0)$)  arc (0:\myangle:\dr) -- cycle;
\draw [white, fill] (A) circle (\tr);
\draw [black!50] (A) circle (\hr);
}
\end{tikzpicture}
\\[1mm]
\begin{tikzpicture} [scale=0.35]
\def\myangle{360};
\def\dr{25pt}
\def\hr{3pt}
\def\tr{5pt}
\foreach \n in {0,2} {
\coordinate (A) at (\n,0);
\draw [black!30, - , fill] (A) -- ($(A)+(\dr,0)$)  arc (0:360:\dr) -- cycle;
\draw [red, - , fill] (A) -- ($(A)+(\dr,0)$)  arc (0:\myangle:\dr) -- cycle;
\draw [white, fill] (A) circle (\tr);
\draw [black!50] (A) circle (\hr);
}
\end{tikzpicture}
\\[1mm]
\begin{tikzpicture} [scale=0.35]
\def\myangle{180};
\def\dr{25pt}
\def\hr{3pt}
\def\tr{5pt}
\foreach \n in {0,2,4,6} {
\coordinate (A) at (\n,0);
\draw [black!30, - , fill] (A) -- ($(A)+(\dr,0)$)  arc (0:360:\dr) -- cycle;
\draw [green, - , fill] (A) -- ($(A)+(\dr,0)$)  arc (0:\myangle:\dr) -- cycle;
\draw [white, fill] (A) circle (\tr);
\draw [black!50] (A) circle (\hr);
}
\end{tikzpicture}
\\[1mm]
\begin{tikzpicture} [scale=0.35]
\def\myangle{72};
\def\dr{25pt}
\def\hr{3pt}
\def\tr{5pt}
\foreach \n in {0,2,4,6,8,10,12,14,16,18} {
\coordinate (A) at (\n,0);
\draw [black!30, - , fill] (A) -- ($(A)+(\dr,0)$)  arc (0:360:\dr) -- cycle;
\draw [blue, - , fill] (A) -- ($(A)+(\dr,0)$)  arc (0:\myangle:\dr) -- cycle;
\draw [white, fill] (A) circle (\tr);
\draw [black!50] (A) circle (\hr);
}
\end{tikzpicture}
\
\caption{CDFs of the response time of $(2k, k)$ fork-join systems, and the required storage\label{fig:fixed_storage_cdf}}
\end{figure}



\section{Conclusion and Future Work}
\label{sec.conclusion}

We analyzed the download time of a content file from a distributed storage system. 
We assume that content of interest is available redundantly on multiple disks, or on multiple nodes throughout the network, entirely or in chunks. Such scenarios may be a consequence of caching throughout a network or as a result of purposeful storage in data centers and storage area networks. Our idea is to also make redundant requests for content in order to reduce the download time through route diversity (the fountain model)
and load balancing (the queuing model). Under this central idea, we showed that the expected download time is significantly reduced using coding -- we divide the content into $k$ parts, apply an $(n,k)$ MDS code and store it on $n$ disks. The file can be recovered by reading any $k$ of the $n$ disks. We analytically studied the mean response time and derived tight upper and lower bounds.

In practical storage systems, adding redundancy in storage not only requires extra capital investment in storage devices, networking and management but also consumes more energy. It has been estimated that around $40\%$ of total operation cost has been related to power distribution, cooling, and electricity bills \cite{CostOfCloud} and the total data center power consumption in 2005 was already $1\%$ of the total US power consumption \cite{MathewSitaramanShenoy}. It would be interesting to study the fundamental tradeoff between power consumption and quality of service (QoS) performance and distill insight on system design. As we have shown in this paper, for the same performance, coding requires less redundancy than conventional replication based storage. We would like to investigate how much energy  can coding based storage save us. Furthermore, this also motivates the research on more efficient load balancing algorithms, which not only fork each job onto a set of servers, but do so with power conservation in mind.

In this paper we do not consider some other possible costs, such as the decoding time required to reconstruct the original content out of $k$ received blocks, or placing redundant requests.  We try to qualitativly illustrate the possible benefits of coding without exactly quantifying the gains in particular, practical systems. Taking the decoding time (which affects delay performance) into consideration motivates us to investigate the optimal redundancy level. We also leave this as our future work.


\bibliographystyle{IEEEtran}
\bibliography{fork_join}

\end{document}